\newtheorem{theorem}{Theorem}
\newtheorem{proposition}{Proposition}
\newtheorem{definition}{Definition}
\def\qed{\ifvmode\Realemovelastskip\fi
{\unskip\nobreak\hfil\penalty50\hbox{}\nobreak\hfil \hbox{\vrule
height1.2ex width1.2ex}\parfillskip=0pt \finalhyphendemerits=0
\par\smallskip}}
\def\qedr{\ifvmode\Realemovelastskip\fi
{\unskip\nobreak\hfil\penalty50\hbox{}\nobreak\hfil \hbox{
$\diamond$}\parfillskip=0pt \finalhyphendemerits=0
\par\smallskip}}
\newenvironment{proof}{\noindent{\sl Proof:~~~}}{\quad \qed}
\def\beq{\begin{equation}}
\def\eeq{\end{equation}}
\def\bea{\begin{eqnarray}}
\def\eea{\end{eqnarray}}
\def\beann{\begin{eqnarray*}}
\def\eeann{\end{eqnarray*}}
\def\beasn{\begin{sneqnarray}}
\def\eeasn{\end{sneqnarray}}
\def\ben{\begin{enumerate}}
\def\een{\end{enumerate}}
\def\bit{\begin{itemize}}
\def\eit{\end{itemize}}
\def\derpar#1#2{\displaystyle\frac{\partial{#1}}{\partial{#2}}}
\def\restric#1#2{\left.#1\right|_{#2}}
\newcommand{\W}{\mathcal{W}}
\renewcommand{\P}{\mathcal{P}}
\newcommand{\vf}{\mathfrak{X}}
\newcommand{\df}{\mathit{\Omega}}
\newcommand{\Lag}{\mathcal{L}}
\newcommand{\Leg}{\mathcal{FL}}
\renewcommand{\d}{\textnormal{d}}
\newcommand{\Nat}{\mathbb{N}}
\newcommand{\R}{\mathbb{R}}
\def\Tan{{\rm T}}
\def\Lie{\mathop{\rm L}\nolimits}
\def\inn{\mathop{i}\nolimits}
\def\tabaddress#1{{\small\it\begin{tabular}[t]{c}#1
\\[1.2ex]\end{tabular}}}
\def\qed{\ifvmode\removelastskip\fi
{\unskip\nobreak\hfil\penalty50\hbox{}\nobreak\hfil \hbox{\vrule
height1.2ex width1.2ex}\parfillskip=0pt \finalhyphendemerits=0
\par\smallskip}}
\title{VARIATIONAL PRINCIPLES FOR MULTISYMPLECTIC SECOND-ORDER CLASSICAL FIELD THEORIES}
\author{
{\sc  Pedro Daniel Prieto-Mart\'\i nez\thanks{{\bf e}-{\it mail}:
   peredaniel@ma4.upc.edu} }\\
   {\sc Narciso Rom\'an-Roy\thanks{{\bf e}-{\it mail}:
   nrr@ma4.upc.edu}}  \\
   \tabaddress{Departamento de Matem\'atica Aplicada IV.
   Edificio C-3, Campus Norte UPC\\
   C/ Jordi Girona 1. 08034 Barcelona. Spain}
}
   \date{\today}
\begin{document}

\maketitle

\pagestyle{myheadings}

\thispagestyle{empty}

\begin{abstract}
We state a unified geometrical version of the variational principles for second-order classical
field theories. The standard Lagrangian and Hamiltonian variational principles and the
corresponding field equations are recovered from this unified framework.
\end{abstract}

\bigskip
\noindent {\bf Key words}:
{\sl Second-order classical field theories; Variational principles; Unified, Lagrangian and Hamiltonian formalisms}

\vbox{\raggedleft AMS s.\,c.\,(2010): 70S05, 49S05, 70H50}\null 
\markright{\rm P.D. Prieto-Mart\'{\i}nez, N. Rom\'{a}n-Roy:   \sl Variational principles for second-order CFT}

\clearpage

\tableofcontents

\vspace{15pt}

\section{Introduction}

As stated in \cite{art:Vitagliano13}, the field equations of a classical field theory arising from
a partial differential Hamiltonian system (in the sense of \cite{art:Vitagliano13}) are locally variational,
that is, they can be derived using a variational
principle. In this work we use the geometric Lagrangian-Hamiltonian formulation for second-order
classical field theories given in \cite{art:Prieto_Roman14} to state the variational principles for
this kind of theories from a geometric point of view, thus giving a different point of view and
completing previous works on higher-order classical field theories
\cite{art:Campos_DeLeon_Martin_Vankerschaver09,art:Vitagliano10}.

\vspace{0.25\baselineskip}

\noindent(All the manifolds are real, second countable and $C^\infty$. The maps and the structures are assumed
to be $C^\infty$. Usual multi-index notation introduced in \cite{book:Saunders89} is used).

\section{Higher-order jet bundles}
\label{geomset}

(See \cite{book:Saunders89} for details).
Let $M$ be an orientable $m$-dimensional smooth manifold, and let $\eta \in \df^m(M)$ be a volume
form for $M$. Let $E \stackrel{\pi}{\longrightarrow} M$ be a bundle with $\dim E = m + n$. If
$k \in \Nat$, the \textsl{$k$th-order jet bundle} of the projection $\pi$, $J^k\pi$, is the manifold
of the $k$-jets of local sections $\phi \in \Gamma(\pi)$; that is, equivalence classes of local
sections of $\pi$ by the relation of equality on every partial derivative up to order $k$. A point
in $J^k\pi$ is denoted by $j^k_x\phi$, where $x \in M$ and $\phi \in \Gamma(\pi)$ is a representative
of the equivalence class. We have the following natural projections: if $r \leqslant k$,
$$
\begin{array}{rcl}
\pi^k_r \colon J^k\pi & \longrightarrow & J^r\pi \\
j^k_x\phi & \longmapsto & j^r_x\phi
\end{array}
\quad \ \quad
\begin{array}{rcl}
\pi^k \colon J^k\pi & \longrightarrow & E \\
j^k_x\phi & \longmapsto & \phi(x)
\end{array}
\quad \ \quad
\begin{array}{rcl}
\bar{\pi}^k \colon J^k\pi & \longrightarrow & M \\
j^k_x\phi & \longmapsto & x
\end{array}
$$
Observe that $\pi^s_r\circ\pi^k_s = \pi^k_r$, $\pi^k_0 = \pi^k$, $\pi^k_k = \textnormal{Id}_{J^k\pi}$,
and $\bar{\pi}^k = \pi \circ \pi^k$.

If local coordinates in $E$ adapted to the bundle structure are $(x^i,u^\alpha)$,
$1 \leqslant i \leqslant m$, $1 \leqslant \alpha \leqslant n$, then local coordinates in $J^k\pi$ are
denoted $(x^i,u_I^\alpha)$, with $0 \leqslant |I| \leqslant k$.

If $\psi \in \Gamma(\pi)$, we denote the \textsl{$k$th prolongation} of $\phi$ to $J^k\pi$ by
$j^k\phi \in \Gamma(\bar{\pi}^k)$.

\begin{definition}
A section $\psi \in \Gamma(\bar{\pi}^{k})$ is \textnormal{holonomic} if
$j^k(\pi^{k} \circ \psi) = \psi$; that is, $\psi$ is the $k$th prolongation of a section
$\phi = \pi^{k} \circ \psi \in \Gamma(\pi)$.
\end{definition}

In the following we restrict ourselves to the case $k=2$. According to \cite{art:Saunders_Crampin90},
consider the subbundle of fiber-affine maps $J^1\bar{\pi}^1 \to \R$ which are constant on the
fibers of the affine subbundle $(\bar{\pi}^1)^*(\Lambda^2\Tan^*M) \otimes (\pi^1)^*(V\pi)$ of
$J^1\bar{\pi}^1$ over $J^1\pi$. This subbundle is canonically diffeomorphic to the
$\pi_{J^1\pi}$-transverse submanifold $J^2\pi^\dagger$ of $\Lambda_2^m(J^1\pi)$ defined locally
by the constraints $p_\alpha^{ij} = p_\alpha^{ji}$, which fibers over $J^{1}\pi$ and $M$ with
projections
$\pi_{J^1\pi}^\dagger \colon J^2\pi^\dagger \to J^1\pi$ and $\bar{\pi}_{J^1\pi}^\dagger \colon J^2\pi^\dagger \to M$,
respectively. The submanifold $j_s \colon J^2\pi^\dagger \hookrightarrow \Lambda_2^m(J^1\pi)$ is the
\textsl{extended $2$-symmetric multimomentum bundle}.

All the canonical geometric structures in $\Lambda_2^m(J^1\pi)$ restrict to
$J^2\pi^\dagger$. Denote $\Theta_1^s = j_s^*\Theta_1 \in \df^{m}(J^2\pi^\dagger)$
and $\Omega_1^s = j_s^*\Omega_1 \in \df^{m+1}(J^2\pi^\dagger)$ the pull-back of the
Liouville forms in $\Lambda^m_2(J^1\pi)$, which we call the \textsl{symmetrized Liouville forms}.

Finally, let us consider the quotient bundle $J^2\pi^\ddagger = J^2\pi^\dagger / \Lambda^m_1(J^1\pi)$,
which is called the \textsl{restricted $2$-symmetric multimomentum bundle}. This bundle is  endowed
with a natural quotient map, $\mu \colon J^2\pi^\dagger \to J^2\pi^\ddagger$, and the natural
projections $\pi_{J^1\pi}^\ddagger \colon J^2\pi^\ddagger \to J^1\pi$ and
$\bar{\pi}_{J^1\pi}^\ddagger \colon J^2\pi^\ddagger \to M$.
Observe that $\dim J^{2}\pi^\ddagger = \dim J^{2}\pi^\dagger - 1$.

\section{Lagrangian-Hamiltonian unified formalism}
\label{laghamunif}

(See \cite{art:Prieto_Roman14} for details).
Let $\pi \colon E \to M$ be the configuration bundle of a second-order field theory, where $M$ is
an orientable $m$-dimensional manifold with volume form $\eta \in \df^{m}(M)$, and
$\dim E = m+n$. Let $\Lag \in \df^{m}(J^2\pi)$ be a second-order Lagrangian density for this field
theory. The \textsl{$2$-symmetric jet-multimomentum bundles} are
$$
\W = J^{3}\pi \times_{J^{1}\pi} J^2\pi^\dagger \quad ; \quad 
\W_r = J^{3}\pi \times_{J^{1}\pi} \, J^{2}\pi^\ddagger \, .
$$

These bundles are endowed with the canonical projections
$\rho_1^r \colon \W_r \to J^{3}\pi$, $\rho_2 \colon \W \to J^{2}\pi^\dagger$,
$\rho^r_2 \colon \W_r \to J^{2}\pi^\ddagger$, and
$\rho_M^r \colon \W_r \to M$. In addition, the natural quotient map
$\mu \colon J^2\pi^\dagger \to J^2\pi^\ddagger$ induces a natural submersion $\mu_\W \colon \W \to \W_r$.

Using the canonical structures in $\W$ and $\W_r$, we define a \textsl{Hamiltonian section}
$\hat{h} \in \Gamma(\mu_\W)$, which is specified by giving a local \textsl{Hamiltonian function}
$\hat{H} \in C^\infty(\W_r)$.
Then we define the forms $\Theta_r = (\rho_2 \circ \hat{h})^*\Theta \in \df^{m}(\W_r)$ and
$\Omega_r = -\d\Theta_r \in \df^{m+1}(\W_r)$.
Finally, $\psi \in \Gamma(\rho_M^r)$ is \textsl{holonomic} in $\W_r$ if
$\rho_1^r \circ \psi \in \Gamma(\bar{\pi}^{3})$ is holonomic in $J^{3}\pi$.

The \textsl{Lagrangian-Hamiltonian problem for sections} associated with the system $(\W_r,\Omega_r)$
consists in finding holonomic sections $\psi \in \Gamma(\rho_M^r)$ satisfying
\begin{equation}\label{eqn:UnifFieldEqSect}
\psi^*\inn(X)\Omega_r = 0 \, , \quad \mbox{for every } X \in \vf(\W_r) \, .
\end{equation}

\begin{proposition}\label{prop:GraphLegMapSect}
A section $\psi \in \Gamma(\rho_M^r)$ solution to the equation \eqref{eqn:UnifFieldEqSect} takes values in
a $n(m + m(m+1)/2)$-codimensional submanifold $j_\Lag \colon \W_\Lag \hookrightarrow \W_r$ which
is identified with the graph of a bundle map $\Leg \colon J^3\pi \to J^{2}\pi^\ddagger$ over
$J^1\pi$ defined locally by
$$
\Leg^*p^i_\alpha = \derpar{\hat{L}}{u_i^\alpha}
- \sum_{j=1}^{m}\frac{1}{n(ij)} \frac{d}{dx^j}\left( \derpar{\hat{L}}{u_{1_i+1_j}^\alpha} \right) \quad ; \quad
\Leg^*p^I_\alpha = \derpar{\hat{L}}{u_I^\alpha} \, .
$$
\end{proposition}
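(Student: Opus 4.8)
The plan is to compute the field equation in an adapted chart, decompose an arbitrary vector field into coordinate directions, and read off the relations that any solution is forced to satisfy. First I would fix local coordinates $(x^i,u^\alpha_I,p^i_\alpha,p^{ij}_\alpha)$ on $\W_r$, with $0\leqslant|I|\leqslant 3$ and $p^{ij}_\alpha=p^{ji}_\alpha$, and write out $\Theta_r=(\rho_2\circ\hat h)^*\Theta$ and $\Omega_r=-\d\Theta_r$ explicitly. Using the coupling of the Hamiltonian section with the local Lagrangian function $\hat L$ determined by $\Lag$, the form $\Theta_r$ takes the Cartan shape in which $p^i_\alpha$ is paired with $\d u^\alpha$, $p^{ij}_\alpha$ with $\d u^\alpha_i$, and $\hat H$ with the volume form; differentiating gives the wedge expression for $\Omega_r$.

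Then I would take an arbitrary $X\in\vf(\W_r)$, expand it in the coordinate basis, contract, pull back along a holonomic section $\psi$, and impose \eqref{eqn:UnifFieldEqSect}. Since the coefficients of $X$ are free, each direction yields an independent scalar equation along $\psi$. The top-jet directions $\partial/\partial u^\alpha_{ijk}$ and the momentum directions $\partial/\partial p^i_\alpha$, $\partial/\partial p^{ij}_\alpha$ produce relations that are automatically satisfied by holonomic sections, so they impose nothing new. The constraints defining the submanifold come from the velocity directions $\partial/\partial u^\alpha_{ij}$ and $\partial/\partial u^\alpha_i$. The first yields at once $p^{ij}_\alpha=\derpar{\hat L}{u^\alpha_{ij}}$, that is $\Leg^*p^I_\alpha=\derpar{\hat L}{u^\alpha_I}$ for $|I|=2$.

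The main obstacle is the relation carrying the total derivative. Contracting $\Omega_r$ with $\partial/\partial u^\alpha_i$ picks up both the coupling contribution $p^i_\alpha-\derpar{\hat L}{u^\alpha_i}$ and the Cartan term in which $p^{ij}_\alpha$ is wedged with $\d u^\alpha_i$. Pulling this Cartan term back along the prolongation $\psi$ converts the fibre differential $\d p^{ij}_\alpha$ into the derivative of its component along $\psi$, i.e. a total derivative $\frac{d}{dx^j}p^{ij}_\alpha$, since along a holonomic section the momenta become functions of $x$; collecting terms gives $p^i_\alpha=\derpar{\hat L}{u^\alpha_i}-\sum_j\frac{d}{dx^j}p^{ij}_\alpha$ along $\psi$. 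Substituting the already-established value $p^{ij}_\alpha=\derpar{\hat L}{u^\alpha_{1_i+1_j}}$ and rewriting the sum in multi-index notation, where the combinatorial multiplicities $n(ij)$ enter, produces the stated expression for $\Leg^*p^i_\alpha$. The delicate points here are tracking the symmetrization factors $n(ij)$ and verifying that this derivative genuinely reorganizes into a total derivative on $J^3\pi$, so that $\Leg^*p^i_\alpha$ is a well-defined function there; this is exactly the higher-order feature absent in first-order theories.

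Finally I would observe that these relations express every momentum coordinate as a function on $J^3\pi$ while leaving the $J^1\pi$-coordinates $(x^i,u^\alpha,u^\alpha_i)$ free; hence they are the local equations of the graph of a bundle map $\Leg\colon J^3\pi\to J^2\pi^\ddagger$ over $J^1\pi$, and the submanifold $\W_\Lag$ in which solutions take values is precisely this graph inside $\W_r=J^3\pi\times_{J^1\pi}J^2\pi^\ddagger$. Counting the prescribed momenta, namely $nm$ functions $p^i_\alpha$ and $nm(m+1)/2$ functions $p^{ij}_\alpha$, gives the codimension $n(m+m(m+1)/2)$, completing the identification.
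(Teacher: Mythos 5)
Your proposal is correct and follows essentially the same route as the proof that this paper defers to \cite{art:Prieto_Roman14}: one writes $\Theta_r$ and $\Omega_r$ in adapted coordinates, contracts with the coordinate vector fields, reads the defining constraints off the $\partial/\partial u^\alpha_{ij}$ and $\partial/\partial u^\alpha_{i}$ directions (the momentum directions reproducing only the holonomy conditions), and substitutes the second-order momentum relation into the total-derivative term so that $\Leg^*p^i_\alpha$ becomes a genuine function on $J^3\pi$. The codimension count and the identification of $\W_\Lag$ with the graph of $\Leg$ are likewise as in that reference.
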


The map $\Leg$ is the \textsl{restricted Legendre map} associated with $\Lag$, and it can be
extended to a map $\widetilde{\Leg} \colon J^3\pi \to J^2\pi^\dagger$, which is called the
\textsl{extended Legendre map}.

\section{Variational Principle for the unified formalism}
\label{sec:VariationalPrincipleUnified}

If $\Gamma(\rho_M^r)$ is the set of sections of $\rho_M^r$, we consider the following
functional (where the convergence of the integral is assumed)
\begin{equation*}
\begin{array}{rcl}
\mathbf{LH} \colon \Gamma(\rho_M^r) & \longrightarrow & \R \\
\psi & \longmapsto & \displaystyle \int_M \psi^*\Theta_r
\end{array}
\end{equation*}

\begin{definition}[Generalized Variational Principle]
The \textnormal{Lagrangian-Hamiltonian variational problem} for the field theory $(\W_r,\Omega_r)$
is the search for the critical holonomic sections of the functional $\mathbf{LH}$ with respect to
the variations of $\psi$ given by $\psi_t = \sigma_t \circ \psi$, where $\left\{ \sigma_t \right\}$
is a local one-parameter group of any compact-supported $\rho_M^r$-vertical vector field $Z$ in
$\W_r$, that is,
\begin{equation*}
\restric{\frac{d}{dt}}{t=0}\int_M \psi_t^*\Theta_r = 0 \, .
\end{equation*}
\end{definition}

\begin{theorem}
\label{thm:EquivalenceVariationalSectionsUnified}
A holonomic section $\psi \in \Gamma(\rho_M^r)$ is a solution to the Lagrangian-Hamiltonian
variational problem if, and only if, it is a solution to equation
\eqref{eqn:UnifFieldEqSect}.
\end{theorem}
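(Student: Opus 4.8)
The plan is to identify the first variation of $\mathbf{LH}$ with the left-hand side of \eqref{eqn:UnifFieldEqSect} contracted against vertical fields, and then to close the gap between vertical fields and arbitrary fields by a degree argument. First I would fix a compact-supported $\rho_M^r$-vertical vector field $Z \in \vf(\W_r)$ with local flow $\{\sigma_t\}$, set $\psi_t = \sigma_t \circ \psi$, differentiate under the integral sign, and recognize the derivative as a Lie derivative:
\begin{equation*}
\restric{\frac{d}{dt}}{t=0}\int_M \psi_t^*\Theta_r = \int_M \psi^*\Lie(Z)\Theta_r .
\end{equation*}
Using Cartan's formula $\Lie(Z)\Theta_r = \inn(Z)\d\Theta_r + \d\,\inn(Z)\Theta_r$, the relation $\Omega_r = -\d\Theta_r$, and the commutation of pull-back with $\d$, this rewrites as
\begin{equation*}
\restric{\frac{d}{dt}}{t=0}\int_M \psi_t^*\Theta_r = -\int_M \psi^*\inn(Z)\Omega_r + \int_M \d\left(\psi^*\inn(Z)\Theta_r\right) .
\end{equation*}
Since $Z$ is compact-supported, Stokes' theorem annihilates the exact term, so the first variation equals $-\int_M \psi^*\inn(Z)\Omega_r$.

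With this computation the implication from \eqref{eqn:UnifFieldEqSect} to criticality is immediate, because $\psi^*\inn(Z)\Omega_r = 0$ holds in particular for every $\rho_M^r$-vertical $Z$, so the first variation vanishes. For the converse I would read criticality as $\int_M \psi^*\inn(Z)\Omega_r = 0$ for all compact-supported vertical $Z$; since $\psi^*\inn(Z)\Omega_r$ is an $m$-form on the $m$-dimensional manifold $M$ depending pointwise and linearly on $Z$, the fundamental lemma of the calculus of variations forces $\psi^*\inn(Z)\Omega_r = 0$ for every vertical $Z$.

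The step I expect to be the crux is promoting this vertical identity to the full equation \eqref{eqn:UnifFieldEqSect}, which is quantified over all $X \in \vf(\W_r)$. The device is a dimension count: $\psi^*\Theta_r$ is an $m$-form on $M$, hence $\psi^*\Omega_r = -\d(\psi^*\Theta_r)$ is an $(m+1)$-form on an $m$-dimensional manifold and vanishes identically. Decomposing an arbitrary $X$ along $\mathrm{Im}(\psi)$ into a part of the form $\Tan\psi(Y)$ with $Y \in \vf(M)$ and a $\rho_M^r$-vertical remainder, I would dispatch the first piece through $\psi^*(\inn(\Tan\psi(Y))\Omega_r) = \inn(Y)\psi^*\Omega_r = 0$ and the second by the previous paragraph, concluding $\psi^*\inn(X)\Omega_r = 0$ for every $X$. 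The one subtlety to verify is that the admissible variations genuinely exhaust all compact-supported vertical fields, so that the fundamental lemma applies fibrewise; this is ensured because holonomy is a restriction on the section $\psi$ itself and not on the class of variations $Z$.
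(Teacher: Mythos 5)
Your proposal is correct and follows essentially the same route as the paper: the Lie derivative/Cartan formula/Stokes computation of the first variation, the fundamental lemma to get $\psi^*\inn(Z)\Omega_r=0$ for all vertical $Z$, and the promotion to arbitrary $X$ via the splitting of $\Tan_w\W_r$ into the $\rho_M^r$-vertical subspace and $\Tan_w(\mathrm{Im}\,\psi)$ together with the degree argument that an $(m+1)$-form pulled back to the $m$-dimensional $M$ vanishes. No substantive differences.
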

\begin{proof}
This proof follows the patterns in \cite{art:Echeverria_DeLeon_Munoz_Roman07} (see also
\cite{proc:Garcia_Munoz83}). Let $Z \in \vf^{V(\rho_M^r)}(\W_r)$ be a compact-supported vector
field, and $V \subset M$ an open set such that $\partial V$ is a $(m-1)$-dimensional manifold
and $\rho_M^r(\textnormal{supp}(Z)) \subset V$. Then,{\small
\begin{align*}
\restric{\frac{d}{dt}}{t=0}& \int_M \psi^*_t\Theta_r
= \restric{\frac{d}{dt}}{t=0} \int_V \psi^*_t\Theta_r
= \restric{\frac{d}{dt}}{t=0} \int_V \psi^*\sigma_t^*\Theta_r
= \int_V\psi^*\left( \lim_{t \to 0} \frac{\sigma_t^*\Theta_r - \Theta_r}{t} \right) \\
&= \int_V\psi^*\Lie(Z)\Theta_r
= \int_V \psi^*(\inn(Z)\d \Theta_r + \d\inn(Z)\Theta_r)
= \int_V \psi^*(-\inn(Z)\Omega_r + \d\inn(Z)\Theta_r)\\
&= - \int_V \psi^*\inn(Z)\Omega_r + \int_V \d(\psi^*\inn(Z)\Theta_r) 
= - \int_V \psi^*\inn(Z)\Omega_r + \int_{\partial V}\psi^*\inn(Z)\Theta_r \\
&= - \int_V\psi^*\inn(Z)\Omega_r \, ,
\end{align*}}
\normalsize as a consequence of Stoke's theorem and the assumptions made on the supports of the vertical
vector fields. Thus, by the fundamental theorem of the variational calculus, we conclude
$$
\restric{\frac{d}{dt}}{t=0} \int_M \psi_t^*\Theta_r = 0 \quad
\Longleftrightarrow \quad \psi^*\inn(Z)\Omega_r = 0 \, ,
$$
for every compact-supported $Z \in \vf^{V(\rho_M^r)}(\W_r)$. However, since the compact-supported
vector fields generate locally the $C^\infty(\W_r)$-module of vector fields in $\W_r$, it follows
that the last equality holds for every $\rho_M^r$-vertical vector field $Z$ in $\W_r$. Now, for every
$w \in \textnormal{Im}\psi$, we have a canonical splitting of the tangent space of $\W_r$ at $w$ in
a $\rho_M^r$-vertical subspace and a $\rho_M^r$-horizontal subspace,
$$
\Tan_w\W_r = V_w(\rho_M^r) \oplus \Tan_w(\textnormal{Im}\psi) \, .
$$
Thus, if $Y \in \vf(\W_r)$, then
$$
Y_w = (Y_w - \Tan_w(\psi \circ \rho_M^r)(Y_w)) + 
\Tan_w(\psi \circ \rho_M^r)(Y_w) \equiv Y_w^V + Y_w^{\psi} \, ,
$$
with $Y_w^V \in V_w(\rho_M^r)$ and $Y_w^{\psi} \in \Tan_w(\textnormal{Im}\psi)$. Therefore
$$
\psi^*\inn(Y)\Omega_r= \psi^*\inn(Y^V)\Omega_r + \psi^*\inn(Y^{\psi})\Omega_r = 
\psi^*\inn(Y^{\psi})\Omega_r \, ,
$$
since $\psi^*\inn(Y^V)\Omega_r = 0$, by the conclusion in the above paragraph. Now, as
$Y^{\psi}_w \in \Tan_w(\textnormal{Im}\psi)$ for every $w \in {\rm Im}\psi$, then the vector
field $Y^{\psi}$ is tangent to $\textnormal{Im}\psi$, and hence there exists a vector field
$X \in \vf(M)$ such that $X$ is $\psi$-related with $Y^{\psi}$; that is,
$\psi_*X = \restric{Y^{\psi}}{\textnormal{Im}\psi}$. Then
$\psi^*\inn(Y^{\psi})\Omega_r = \inn(X)\psi^*\Omega_r$. However, as $\dim\textnormal{Im}\psi = \dim M = m$ and
$\Omega_r$ is a $(m+1)$-form, we obtain that $\psi^*\inn(Y^{\psi})\Omega_r = 0$.
Hence, we conclude that $\psi^*\inn(Y)\Omega_r = 0$ for every $Y \in \vf(\W_r)$.

Taking into account the reasoning of the first paragraph, the converse is obvious 
since the condition $\psi^*\inn(Y)\Omega_r = 0$, for every $Y \in \vf(\W_r)$,
holds, in particular, for every $Z \in \vf^{V(\rho_\R^r)}(\W_r)$.
\end{proof}

\section{Lagrangian variational problem}
\label{Lagvp}

Consider the submanifold $j_\Lag\colon \W_\Lag \hookrightarrow \W_r$.
Since $\W_\Lag$ is the graph of the restricted Legendre map, the map
$\rho_1^\Lag = \rho_1^r \circ j_\Lag \colon \W_\Lag \to J^{3}\pi$ is a diffeomorphism.
Then we can define the \textsl{Poincar\'{e}-Cartan $m$-form} as
$\Theta_{\Lag} = (j_\Lag\circ(\rho^\Lag_1)^{-1})^* \Theta_r \in \df^{m}(J^3\pi)$.
This form coincides with the usual Poincar\'{e}-Cartan $m$-form derived in
\cite{art:Saunders87,art:Saunders_Crampin90}.

Given the Lagrangian field theory $(J^{3}\pi,\Omega_\Lag)$, consider the following functional
\begin{equation*} \label{eqn:DefnFunctionalVariationalLagrangian}
\begin{array}{rcl}
\mathbf{L} \colon \Gamma(\pi) & \longrightarrow & \R \\
\phi & \longmapsto & \displaystyle \int_M (j^{3}\phi)^*\Theta_\Lag
\end{array}
\end{equation*}

\begin{definition}[Generalized Hamilton Variational Principle]
The \textnormal{Lagrangian variational problem} (or \textnormal{Hamilton variational problem})
for the second-order Lagrangian field theory $(J^{3}\pi,\Omega_\Lag)$ is the search for the critical
sections of the functional $\mathbf{L}$ with respect to the variations of $\phi$
given by $\phi_t = \sigma_t \circ \phi$, where $\left\{ \sigma_t \right\}$ is a local one-parameter
group of any compact-supported $Z \in \vf^{V(\pi)}(E)$; that is,
\begin{equation*}
\restric{\frac{d}{dt}}{t=0}\int_M (j^{3}\phi_t)^*\Theta_\Lag = 0 \, .
\end{equation*}
\end{definition}

\begin{theorem} \label{thm:LagrangianVariational}
Let $\psi \in \Gamma(\rho_M^r)$ be a holonomic section which is critical for the functional
$\mathbf{LH}$. Then, $\phi = \pi^{3} \circ \rho_1^r \circ \psi \in \Gamma(\pi)$ is critical for
the functional $\mathbf{L}$.

\noindent Conversely, if $\phi \in \Gamma(\pi)$ is a critical section for the functional $\mathbf{L}$,
then the section $\psi = j_\Lag \circ (\rho_1^\Lag)^{-1} \circ j^3\phi \in \Gamma(\rho_M^r)$
is holonomic and it is critical for the functional $\mathbf{LH}$.
\end{theorem}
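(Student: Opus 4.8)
The plan is to move criticality back and forth across the fibre diffeomorphism $\Phi := j_\Lag \circ (\rho_1^\Lag)^{-1} \colon J^3\pi \to \W_\Lag \subset \W_r$ onto the graph of $\Leg$. By the very definition of the Poincar\'e--Cartan form, $\Phi^*\Theta_r = \Theta_\Lag$, and hence $\Phi^*\Omega_r = \Omega_\Lag$. The object that links the two problems is, for each $\phi \in \Gamma(\pi)$, the section $\psi_\phi := \Phi \circ j^3\phi \in \Gamma(\rho_M^r)$: it is holonomic, since $\rho_1^r \circ \psi_\phi = \rho_1^\Lag \circ (\rho_1^\Lag)^{-1} \circ j^3\phi = j^3\phi$, and it satisfies $\psi_\phi^*\Theta_r = (j^3\phi)^*\Phi^*\Theta_r = (j^3\phi)^*\Theta_\Lag$, so that $\mathbf{LH}(\psi_\phi) = \mathbf{L}(\phi)$. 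I would first upgrade this to a correspondence of variations: a compact-supported $Z \in \vf^{V(\pi)}(E)$ prolongs to the $\bar{\pi}^3$-vertical field $j^3Z$ on $J^3\pi$, whose image $\Phi_*(j^3Z)$ is $\rho_M^r$-vertical and tangent to $\W_\Lag$ (because $\rho_M^r \circ \Phi = \bar{\pi}^3$); extending it to some $\widetilde{Z} \in \vf^{V(\rho_M^r)}(\W_r)$, its flow leaves $\W_\Lag$ invariant and is $\Phi$-related to the flow $j^3\sigma_t$ of $j^3Z$, where $\sigma_t$ denotes the flow of $Z$. This yields $\widetilde{\sigma}_t \circ \psi_\phi = \Phi \circ j^3\phi_t = \psi_{\phi_t}$ for the varied section $\phi_t = \sigma_t \circ \phi$, and therefore $\mathbf{L}(\phi_t) = \mathbf{LH}(\psi_{\phi_t})$ for all $t$.

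For the direct implication, let $\psi$ be holonomic and critical for $\mathbf{LH}$. By Theorem~\ref{thm:EquivalenceVariationalSectionsUnified} it solves \eqref{eqn:UnifFieldEqSect}, and then Proposition~\ref{prop:GraphLegMapSect} forces $\psi$ to take values in $\W_\Lag$; writing $\psi = j_\Lag \circ \bar{\psi}$ and combining with $\rho_1^r \circ \psi = j^3\phi$, where $\phi = \pi^3 \circ \rho_1^r \circ \psi$, gives $\bar{\psi} = (\rho_1^\Lag)^{-1} \circ j^3\phi$, i.e. $\psi = \psi_\phi$. For an arbitrary variation $\phi_t = \sigma_t \circ \phi$ the previous paragraph then gives $\restric{\frac{d}{dt}}{t=0} \mathbf{L}(\phi_t) = \restric{\frac{d}{dt}}{t=0} \mathbf{LH}(\psi_{\phi_t}) = 0$, the last equality because $\psi = \psi_\phi$ is critical for $\mathbf{LH}$ and $\psi_{\phi_t}$ is a legitimate $\rho_M^r$-vertical variation of it; hence $\phi$ is critical for $\mathbf{L}$.

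For the converse, let $\phi$ be critical for $\mathbf{L}$ and set $\psi = \psi_\phi$, which is holonomic. By Theorem~\ref{thm:EquivalenceVariationalSectionsUnified} it suffices to check $\psi^*\inn(Z)\Omega_r = 0$ for every $Z \in \vf^{V(\rho_M^r)}(\W_r)$. Along $\operatorname{Im}\psi \subset \W_\Lag$ I would split $Z = Z^{\parallel} + Z^{\perp}$ into its part tangent to $\W_\Lag$ and its part transverse to $\W_\Lag$ (i.e. pointing along the momentum fibres of $J^2\pi^\ddagger \to J^1\pi$). For the tangent part, $Z^{\parallel} = \Phi_*W$ for a $\bar{\pi}^3$-vertical $W \in \vf(J^3\pi)$, so that $\psi^*\inn(Z^{\parallel})\Omega_r = (j^3\phi)^*\inn(W)\Omega_\Lag$, and these pull-backs vanish because $\phi$ is critical for $\mathbf{L}$ (this is the Lagrangian counterpart of Theorem~\ref{thm:EquivalenceVariationalSectionsUnified}, proved by the same vertical/horizontal splitting). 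For the transverse part, the local computation underlying Proposition~\ref{prop:GraphLegMapSect} shows that the momentum-direction components of $\psi^*\inn(Z)\Omega_r$ are exactly the relations defining $\W_\Lag$ as the graph of $\Leg$, and these hold identically since $\psi_\phi$ takes values in $\W_\Lag$. Thus $\psi^*\inn(Z)\Omega_r = 0$ for all vertical $Z$, and Theorem~\ref{thm:EquivalenceVariationalSectionsUnified} returns the criticality of $\psi_\phi$ for $\mathbf{LH}$.

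The main obstacle is the tangent part of the converse: passing, on $J^3\pi$, from the vanishing of $(j^3\phi)^*\inn(j^3Z)\Omega_\Lag$ against \emph{prolonged} variations $j^3Z$ to its vanishing against \emph{all} $\bar{\pi}^3$-vertical $W$, since prolongations do not span the vertical bundle for $k \geqslant 1$. I expect to clear this exactly as in the final step of the proof of Theorem~\ref{thm:EquivalenceVariationalSectionsUnified}: the dimension argument removes the horizontal component, and the contact (Poincar\'e--Cartan) structure of $\Omega_\Lag$ forces the remaining genuinely vertical, non-prolonged components to vanish on holonomic sections. The transverse part, by contrast, is purely structural and is read off from the local form of $\Leg$ in Proposition~\ref{prop:GraphLegMapSect}; I would follow the first-order model of \cite{art:Echeverria_DeLeon_Munoz_Roman07} throughout.
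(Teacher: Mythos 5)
Your overall architecture is the one the paper intends (its own proof is only a pointer to Theorem~\ref{thm:EquivalenceVariationalSectionsUnified}): use the diffeomorphism $\Phi = j_\Lag\circ(\rho_1^\Lag)^{-1}$ and the identity $\Phi^*\Theta_r = \Theta_\Lag$ to get $\mathbf{LH}(\Phi\circ j^3\phi)=\mathbf{L}(\phi)$, and transport variations through $\Phi$. The direct implication is handled correctly: Theorem~\ref{thm:EquivalenceVariationalSectionsUnified} plus Proposition~\ref{prop:GraphLegMapSect} force $\textnormal{Im}\,\psi\subset\W_\Lag$, so $\psi=\psi_\phi$, and prolonged variations of $\phi$ become admissible vertical variations of $\psi_\phi$.

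In the converse, however, there are two problems. First, a concrete misidentification: the contractions of $\Omega_r$ with the momentum directions $\partial/\partial p^i_\alpha$, $\partial/\partial p^I_\alpha$ do \emph{not} produce the Legendre relations defining $\W_\Lag$; they produce the holonomy conditions $u^\alpha_i=\partial u^\alpha/\partial x^i$, $u^\alpha_{I}= \partial u^\alpha_i/\partial x^j$ (in a Skinner--Rusk-type formalism the momenta are the multipliers of the holonomy constraints). The Legendre relations come instead from contracting with the fibre directions of $J^3\pi\to J^1\pi$, i.e.\ $\partial/\partial u^\alpha_I$ with $|I|\geqslant 2$ --- and in your splitting those directions are \emph{not} purely transverse to $\W_\Lag$. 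Your transverse term does vanish, but because $\psi_\phi$ is holonomic, not because it takes values in $\W_\Lag$; as written, the justification would fail. Second, and more seriously, the tangent part is where all the mathematical content of the converse lives, and you only gesture at it. You need: criticality of $\phi$ for $\mathbf{L}$ (i.e.\ vanishing of the first variation against \emph{prolonged} fields $j^3Z$ only) implies $(j^3\phi)^*\inn(W)\Omega_\Lag=0$ for \emph{every} $\bar{\pi}^3$-vertical $W$. This does not follow from ``the same vertical/horizontal splitting'' of Theorem~\ref{thm:EquivalenceVariationalSectionsUnified}, because there the admissible variations already span the whole vertical bundle. Here one must (i) perform the classical integration-by-parts computation to extract the Euler--Lagrange expressions from $\int_M (j^3\phi)^*\inn(j^3Z)\Omega_\Lag$ and apply the fundamental lemma, and (ii) use the explicit contact structure of the Poincar\'e--Cartan form to show that, on a holonomic section, $(j^3\phi)^*\inn(W)\Omega_\Lag$ reduces to those same Euler--Lagrange expressions paired with the $\partial/\partial u^\alpha$-component of $W$, all other vertical components being annihilated. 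Until step (ii) is actually carried out (it is the content of the classical results in \cite{proc:Garcia_Munoz83,art:Saunders_Crampin90}), the converse is not proved.
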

\begin{proof}
The proof follows the same patterns as in Theorem \ref{thm:EquivalenceVariationalSectionsUnified}.
The same reasoning also proves the converse.

\end{proof}

\section{Hamiltonian variational problem}
\label{Hamvp}

Let $\widetilde{\P} = \textnormal{Im}(\widetilde{\Leg}) \stackrel {\tilde{\jmath}}{\hookrightarrow} J^2\pi^\dagger$
and $\P = \textnormal{Im}(\Leg) \stackrel{\jmath}{\hookrightarrow}J^{2}\pi^\ddagger$ the image of
the extended and restricted Legendre maps, respectively; $\bar{\pi}_\P \colon \P \to M$ the
natural projection, and $\Leg_o \colon J^3\pi \to \P$ the map defined by $\Leg = \jmath \circ \Leg_o$.

A Lagrangian density $\Lag \in \df^{m}(J^{2}\pi)$ is \textsl{almost-regular} if
(i) $\P$ is a closed submanifold of $J^{2}\pi^\ddagger$,
(ii) $\Leg$ is a submersion onto its image, and
(iii) for every $j^3_x\phi \in J^{3}\pi$, the fibers $\Leg^{-1}(\Leg(j^3_x\phi))$ are connected
submanifolds of $J^{3}\pi$.

The Hamiltonian section $\hat{h} \in \Gamma(\mu_\W)$ induces a
Hamiltonian section $h \in \Gamma(\mu)$ defined by
$\rho_2\circ \hat{h} = h \circ \rho^r_2$.
Then, we define the \textsl{Hamilton-Cartan $m$-form} in $\P$ as
$\Theta_h = (h \circ \jmath)^*\Theta_{1}^s \in \df^{m}(\P)$.
Observe that $\Leg_o^*\Theta_h = \Theta_\Lag$.

In what follows, we consider that the Lagrangian density $\Lag \in \df^{m}(J^{2}\pi)$ is, at least,
almost-regular. Given the Hamiltonian field theory $(\P,\Omega_h)$, let $\Gamma(\bar{\pi}_\P)$ be
the set of sections of $\bar{\pi}_\P$. Consider the following functional
\begin{equation*}
\begin{array}{rcl}
\mathbf{H} \colon \Gamma(\bar{\pi}_\P) & \longrightarrow & \R \\
\psi_h & \longmapsto & \displaystyle \int_M \psi_h^*\Theta_{\P}
\end{array}
\end{equation*}

\begin{definition}[Generalized Hamilton-Jacobi Variational Principle]
The \textnormal{Hamiltonian variational problem} (or \textnormal{Hamilton-Jacobi variational problem})
for the second-order Hamiltonian field theory $(\P,\Omega_{h})$ is the search for the critical
sections of the functional $\mathbf{H}$ with respect to the variations
of $\psi_h$ given by $(\psi_h)_t = \sigma_t \circ \psi_h$, where $\left\{ \sigma_t \right\}$ is a
local one-parameter group of any compact-supported $Z \in \vf^{V(\bar{\pi}_\P)}(\P)$,
\begin{equation*}
\restric{\frac{d}{dt}}{t=0}\int_M (\psi_h)_t^*\Theta_{h} = 0 \, .
\end{equation*}
\end{definition}

\begin{theorem} \label{thm:HamiltonianVariational}
Let $\psi \in \Gamma(\rho_M^r)$ be a critical section of the functional $\mathbf{LH}$. Then, the
section $\psi_h = \Leg_o \circ \rho_1^r \circ \psi \in \Gamma(\bar{\pi}_\P)$ is a critical section
of the functional $\mathbf{H}$.

\noindent Conversely, if $\psi_h \in \Gamma(\bar{\pi}_\P)$ is a critical section of the functional $\mathbf{H}$,
then the section $\psi = j_\Lag \circ (\rho_1^\Lag)^{-1} \circ \gamma \circ \psi_h \in \Gamma(\rho_M^r)$
is a critical section of the functional $\mathbf{LH}$, where $\gamma \in \Gamma_\P(\Leg_o)$ is a local
section of $\Leg_o$.
\end{theorem}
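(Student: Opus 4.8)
The plan is to establish both directions by transporting the variational problem on $\P$ back to the unified setting $\W_r$, exploiting the factorization $\Leg = \jmath \circ \Leg_o$ and the key pullback relation $\Leg_o^*\Theta_h = \Theta_\Lag$ already recorded in this section. The central observation is that all three functionals $\mathbf{LH}$, $\mathbf{L}$, and $\mathbf{H}$ compute the same number once we restrict to holonomic sections, because the relevant Poincar\'{e}--Cartan and Hamilton--Cartan forms are mutually related by pullback along the Legendre maps. Thus the strategy is not to redo the variational calculation from scratch, but to chain together Theorem~\ref{thm:LagrangianVariational} (which already links $\mathbf{LH}$ with $\mathbf{L}$) with the relation between $\mathbf{L}$ and $\mathbf{H}$ induced by $\Leg_o$.

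First I would prove the forward implication. Given a holonomic critical section $\psi$ of $\mathbf{LH}$, set $\phi = \pi^3 \circ \rho_1^r \circ \psi \in \Gamma(\pi)$, which is critical for $\mathbf{L}$ by Theorem~\ref{thm:LagrangianVariational}. Then I would show that $\psi_h = \Leg_o \circ \rho_1^r \circ \psi$ equals $\Leg_o \circ j^3\phi$, using that holonomy gives $\rho_1^r \circ \psi = j^3\phi$. The decisive computation is then
\begin{equation*}
\int_M \psi_h^*\Theta_h = \int_M (\Leg_o \circ j^3\phi)^*\Theta_h = \int_M (j^3\phi)^*\Leg_o^*\Theta_h = \int_M (j^3\phi)^*\Theta_\Lag = \mathbf{L}(\phi) \, ,
\end{equation*}
so that the two functionals agree along these sections. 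Because the variations $(\psi_h)_t = \sigma_t \circ \psi_h$ generated by $\pi_\P$-vertical fields correspond, via $\Leg_o$, to variations of $\phi$ by $\pi$-vertical fields on $E$, criticality of $\phi$ for $\mathbf{L}$ forces criticality of $\psi_h$ for $\mathbf{H}$. I would here mirror the infinitesimal argument of Theorem~\ref{thm:EquivalenceVariationalSectionsUnified}, reducing to $\psi_h^*\inn(Z)\Omega_h = 0$ and then to $\psi_h^*\inn(Y)\Omega_h = 0$ for all $Y$ via the canonical splitting $\Tan_w\P = V_w(\bar{\pi}_\P) \oplus \Tan_w(\mathrm{Im}\,\psi_h)$ together with the dimension count $\dim \mathrm{Im}\,\psi_h = m < m+1 = \deg \Omega_h$.

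For the converse, given a critical section $\psi_h$ of $\mathbf{H}$, I would choose a local section $\gamma \in \Gamma_\P(\Leg_o)$ of the submersion $\Leg_o$ (available by almost-regularity, condition (ii)), form $j^3\phi$ where $\phi$ recovers the configuration data, and set $\psi = j_\Lag \circ (\rho_1^\Lag)^{-1} \circ \gamma \circ \psi_h$. The same pullback identity run backwards shows $\mathbf{L}(\phi) = \mathbf{H}(\psi_h)$, so $\phi$ is critical for $\mathbf{L}$, and then Theorem~\ref{thm:LagrangianVariational} delivers that $\psi$ is holonomic and critical for $\mathbf{LH}$.

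The main obstacle I anticipate is the well-definedness and independence of the construction in the converse direction from the choice of local section $\gamma$ of $\Leg_o$. Since $\Leg_o$ need only be a submersion with connected fibers (almost-regularity, condition (iii)) rather than a diffeomorphism, one must check that the resulting $\psi$ does not depend on $\gamma$ and is genuinely holonomic; this is exactly where conditions (i)--(iii) defining almost-regularity enter, guaranteeing that $\Theta_h$ is well-defined on $\P$ and that the fiber ambiguity of $\gamma$ is annihilated upon pulling back. The remaining steps are the routine infinitesimal variational manipulations, which run in parallel with the proof of Theorem~\ref{thm:EquivalenceVariationalSectionsUnified} and need not be reproduced in detail.
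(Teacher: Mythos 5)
Your overall strategy --- transporting everything through the factorization $\Leg = \jmath\circ\Leg_o$ and the identity $\Leg_o^*\Theta_h = \Theta_\Lag$ --- is sound, and your closing remark that one should ``mirror the infinitesimal argument of Theorem \ref{thm:EquivalenceVariationalSectionsUnified}'' is in fact exactly what the paper does: its proof consists of repeating the Stokes/Lie-derivative computation on $(\P,\Omega_h)$ and transferring the resulting condition $\psi_h^*\inn(Z)\Omega_h=0$ to the unified equation via the Legendre map. Your chaining through Theorem \ref{thm:LagrangianVariational} is a legitimate alternative scaffolding for the functional identities $\mathbf{H}(\psi_h)=\mathbf{L}(\phi)=\mathbf{LH}(\psi)$.

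There is, however, a gap in the way you justify the forward implication. You argue that since $\mathbf{H}(\psi_h)=\mathbf{L}(\phi)$ and ``the variations $(\psi_h)_t=\sigma_t\circ\psi_h$ generated by $\bar{\pi}_\P$-vertical fields correspond, via $\Leg_o$, to variations of $\phi$ by $\pi$-vertical fields on $E$,'' criticality of $\phi$ forces criticality of $\psi_h$. The correspondence goes the wrong way for this inference. The admissible variations in the Hamilton--Jacobi principle are generated by \emph{arbitrary} compact-supported $\bar{\pi}_\P$-vertical vector fields on $\P$, and the fibers of $\bar{\pi}_\P$ contain multimomentum directions: a generic vertical $Z$ on $\P$ moves positions and momenta independently, whereas a variation of the form $\Leg_o\circ j^3\phi_t$ moves the momenta in the specific way dictated by the tangent map of $\Leg_o$. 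So the variations of $\psi_h$ induced from variations of $\phi$ form a strict subfamily, and the equality of functionals only controls that subfamily; it yields ``$\psi_h$ critical $\Rightarrow$ $\phi$ critical,'' not the implication you need. To close the forward direction you must genuinely run the infinitesimal argument: reduce criticality of $\psi_h$ to $\psi_h^*\inn(Z)\Omega_h=0$ for all vertical $Z$, lift such $Z$ through the surjective submersion $\Leg_o\circ\rho_1^\Lag\circ j_\Lag^{-1}\colon\W_\Lag\to\P$ (using almost-regularity), and invoke $j_\Lag^*\Omega_r=(\Leg_o\circ\rho_1^\Lag)^*\Omega_h$ together with the unified equation \eqref{eqn:UnifFieldEqSect} satisfied by $\psi$; the extra momentum-direction variations are killed precisely because $\psi_h$ factors through the Legendre map. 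Your instinct about the role of $\gamma$ and of conditions (i)--(iii) in the converse is correct and worth keeping, but the converse likewise needs the lifting argument rather than the functional equality alone, since holonomy of $\psi=j_\Lag\circ(\rho_1^\Lag)^{-1}\circ\gamma\circ\psi_h$ must be verified and the unified equation must be established for all vector fields on $\W_r$, not only those tangent to $\W_\Lag$.
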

\begin{proof}
The proof follows the same patterns as in Theorem \ref{thm:EquivalenceVariationalSectionsUnified}.
The same reasoning also proves the converse, bearing in mind that $\gamma \in \Gamma_\P(\Leg_o)$
is a local section.

\end{proof}

\section{The higher-order case}

As stated in \cite{art:Prieto_Roman14}, this formulation fails when we try to generalize it to
a classical field theory of order greater or equal than $3$. The main obstruction to do so is
the relation among the multimomentum coordinates used to define the submanifold $J^2\pi^\dagger$,
$p^{ij}_\alpha = p_{\alpha}^{ji}$ for every $1 \leqslant i,j \leqslant m$ and every
$1 \leqslant \alpha \leqslant n$. Although this ``symmetry'' relation on the multimomentum
coordinates can indeed be generalized to higher-order field theories, it only holds for the
highest-order multimomenta. That is, this relation on the multimomenta is not invariant
under change of coordinates for lower orders, and hence we do not obtain a submanifold of
$\Lambda^m_2(J^{k-1}\pi)$.

\section*{Acknowledgments}

We acknowledge the financial support of the  \textsl{MICINN}, projects MTM2011-22585 and
MTM2011-15725-E. P.D. Prieto-Mart\'{\i}nez thanks the UPC for a Ph.D grant.


\begin{thebibliography}{1}

\bibitem{art:Campos_DeLeon_Martin_Vankerschaver09}
C.M. {Campos}, M.~{de Le\'on}, D.~{Mart\'{\i}n de Diego}, and
  J.~{Vankerschaver}, ``Unambigous formalism for higher-order {L}agrangian
  field theories'', \textit{J. Phys. A: Math Theor.} \textbf{42}(47)  (2009)
  475207.

\bibitem{art:Echeverria_DeLeon_Munoz_Roman07}
A.~{Echeverr\'{\i}a-Enr\'{\i}quez}, M.~{De Le\'{o}n}, M.C. {Mu\~{n}oz-Lecanda},
  and N.~{Rom\'{a}n-Roy}, ``Extended {Hamiltonian} systems in multisymplectic
  field theories'', \textit{J. Math. Phys.} \textbf{48}(11)  (2007) 112901.

\bibitem{proc:Garcia_Munoz83}
P.L. {Garc\'{\i}a} and J.~{Mu\~{n}oz}, ``On the geometrical structure of higher
  order variational calculus'',
  \textit{Atti. Accad. Sci. Torino Cl. Sci. Fis. Math. Natur.} \textbf{117} (1983) suppl. 1,
  127--147.

\bibitem{art:Prieto_Roman14}
P.D. {Prieto-Mart\'{\i}nez} and N.~{Rom\'{a}n-Roy}, ``A multisymplectic unified
  formalism for second-order classical field theories'', {arXiv}:1402.4087
  [math-ph],  2014.
  
\bibitem{art:Saunders87}
D.J. {Saunders}, ``An alternative approach to the {Cartan} form in {Lagrangian}
  field theories'', \textit{J. Phys. A: Math. Gen.} \textbf{20} (1987)
  339--349.

\bibitem{book:Saunders89}
D.J. {Saunders}, \emph{The geometry of jet bundles}, London Mathematical
  Society, Lecture notes series, vol. 142, Cambridge University Press,
  Cambridge, New York 1989.

\bibitem{art:Saunders_Crampin90}
D.J. {Saunders} and M.~{Crampin}, ``On the {Legendre} map in higher-order field
  theories'', \textit{J. Phys. A: Math. Gen.} \textbf{23} (1990) 3169--3182.

\bibitem{art:Vitagliano10}
L.~{Vitagliano}, ``The {Lagrangian}-{Hamiltonian} {Formalism} for higher order
  field theories'', \textit{J. Geom. Phys.} \textbf{60}(6-8)  (2010) 857--873.

\bibitem{art:Vitagliano13}
L.~{Vitagliano}, ``Partial differential {Hamiltonian} systems'',
  \textit{Canad. J. Math.} \textbf{65}(5)  (2013) 1164--1200.
  
\end{thebibliography}
\end{document}